\newtheorem{lemma}{Lemma}[section]
\newtheorem{corollary}[lemma]{Corollary}
\newtheorem{conj*}[lemma]{Conjecture}
\newtheorem*{remark*}{Remark}
\makeatletter \@addtoreset {equation}{section}
\renewcommand\theequation
\z@ \arabic{section}.\arabic{subsection}.\arabic{equation}
  \else \arabic{section}.\arabic{equation} \fi}
\begin{document}
\title{Localization of Multi-Dimensional Wigner Distributions}
\author{Elliott H. Lieb \ and \ Yaron Ostrover}
\date{August 10, 2010
\thanks{\copyright\, 2010 by the authors. This paper may be reproduced,
in its entirety, for non-commercial purposes.}}
\maketitle
\begin{abstract}
  A well known result of P. Flandrin states that a Gaussian uniquely
  maximizes the integral of the Wigner distribution over every centered disc
  in the phase plane.  While there is no difficulty in generalizing
  this result to higher-dimensional poly-discs, the generalization to
  balls is less obvious. In this note we provide such a generalization.
\end{abstract}

\section{Introduction}
The Wigner quasi-probability distribution was introduced by
Wigner~\cite{Wi} in 1932 in order to study quantum corrections to
classical statistical mechanics. Nowadays it lies at the core of the
phase-space formulation of quantum mechanics (Weyl correspondence),
and has a variety of applications in statistical mechanics, quantum
optics, and signal analysis, to name a few. In this note we consider
the localization problem of the $n$-particle Wigner distribution in
the $2n$-dimensional phase space.  We state our results precisely in
Theorem 1 below.

Equip the classical phase space ${\mathbb R}^{2n}$ with coordinates
$(x,y)$ with $x,y \in {\mathbb R}^{n}$. The Wigner quasi-probability
distribution on ${\mathbb R}^{2n}$, associated with a wave function
$\psi \in L^2({\mathbb R}^n)$ and its complex conjugate $\psi^*$, is
defined by
\begin{equation} \label {def:wigner} {\cal W}_{\psi}(x,y) =
(2 \pi)^{-n} \int_{{\mathbb R}^n}
\psi(x + {{\tau}/ 2}) \psi^*(x - \tau/2) e^{-i \tau \cdot y} \, d \tau
\end{equation}

The function ${\cal W}_{\psi}$ possesses many of the properties of a
phase space probability distribution (see e.g., \cite{J}); in
particular, it is real.
However, ${\cal W}_{\psi}$ is not a genuine probability distribution as
it can assume negative values.

The localization problem, i.e., estimating the integral of the Wigner
distribution over a subregion of the phase space, and the closely
related problem of the optimal simultaneous concentration of $\psi$
and its Fourier transform $\widehat \psi$, have received much
attention in the literature both in quantum mechanics, mathematical time-frequency analysis, and signal
processing (see e.g.~\cite{BDW1,F,G,J,J1,Le,OtM,Pool,Sl,SP,RT}, and the
references within).  Bounds on the $L^p$ norms were found in
\cite{Lieb}.  More precisely, the problem of interest for us is:

\noindent {\bf The Wigner Distribution Localization Problem:}
{\it given a measurable set $D \subset
{\mathbb R}^{2n}$, find the best possible bounds to the localization function
\begin{equation} \label{eq-def-of-energy}
 {\mathcal E}(D) := \sup_{\psi}  \int_{D}
{\cal W}_{\psi} \, dxdy,
\end{equation}
where the supremum is taken over all the functions $\psi \in
L^2({\mathbb R}^n)$ with $\|\psi\|_2 =1$. }

The quantity ${\mathcal E}(D)$ is invariant under translations in the
phase space, and under the action of the group of linear symplectic
transformations (see e.g.~\cite{W}). There is no upper bound on
${\mathcal E}(D)$; it can be infinite.  Indeed, there is a $\psi \in
L^2({\mathbb R})$ such that $\int |{\cal W}_\psi| dxdy = \infty$
\cite[sect. 4.6]{J}. An example is $\psi(x) = 1 $ if $-\frac{1}{2} <x
< \frac{1}{2}$ and $\psi(x) = 0 $ otherwise.  On the other hand, the
$L^p$ norm of ${\cal W}_\psi$ is bounded \cite{Lieb} for $p\geq 2$ and
we can use this information to show that ${\mathcal E}(D)$ is bounded
by powers of the volume $|D|$. E.g., the $L^{\infty} $ norm is at most
$\pi^{-n}$, so  ${\mathcal E}(D) \leq \pi^{-n} |D|$.

For certain $D$, however, ${\mathcal E}(D)$ is not only finite, it is
even less than 1. In~\cite{F}, Flandrin conjectured this to be true
for all convex domains, and he showed that for all centered
two-dimensional discs $B^2(r)$ of radius $r$, the standard normalized
Gaussian function $\pi^{-1/4}\exp(-x^2/2)$ is the unique maximizer
of~$(\ref{eq-def-of-energy})$.
In particular ${\mathcal E}(B^2(r)) = 1-e^{-r^2}$
(see~\cite{F}, cf.~\cite{J}).  It follows immediately from the definition of the
Wigner distribution that Flandrin's proof can be easily generalized to
higher dimensional poly-discs because the maximization problem then
has a simple product structure.  A less obvious case is the
$2n$-dimensional Euclidean ball $B^{2n}(r)$. 
The following is the generalization of Flandrin's result, and our main result:

\noindent {\bf {\large Theorem 1.}} \label{Flandrin's-result-in-higher-dimension}
{\it The standard normalized Gaussian $\pi^{-n/4}\exp(-x^2/2)$ in $L_2({\mathbb R}^n)$ is
the unique maximizer of the Wigner distribution localization problem for
any $2n$-dimensional Euclidean ball centered at the origin. In particular,
\begin{equation} \label{basic}
{\mathcal E} \left(B^{2n}(r) \right)= {\frac 1
    {\pi^n}} \int_{B^{2n}(r)}e^{-\sum_{i=1}^n(x_i^2+y_i^2)} \,
  dxdy=1-{\frac {\Gamma(n,r^2)} {(n-1)!}},
\end{equation}
where $\Gamma(s,x) = \int_x^{\infty} t^{s-1} \, e^{-t} dt$ is the upper
incomplete gamma function.}

{\it Remarks:}  (1.)  Owing to the translation
covariance of the Wigner distribution, equation~\eqref{basic}
also applies  to
a ball of radius $r$ centered anywhere in ${\mathbb R}^{2n}$. It
is only necessary to multiply the Gaussian by an appropriate linear
form $\exp(a\cdot x)$. Moreover, since the localization function~$(\ref{eq-def-of-energy})$ is invariant under 
the action of the group of linear symplectic transformations, Theorem 1 can also
be adapted to any image of the Euclidean ball under linear symplectic maps.

\quad (2.)  Another generalization is to replace the integral over the
ball with the integral over ${\mathbb R}^{2n}$, but with a weight that
is a symmetric decreasing function (i.e., a radial and non-increasing
function of  the radius $\sqrt{x^2+y^2}$). By the ``layer cake representation''
\cite[sect.~1.13]{LiebLoss} the standard Gaussian again maximizes
uniquely.

\section{Proof of
Theorem~\ref{Flandrin's-result-in-higher-dimension}}

We start with the following preliminaries. Recall that the mixed
Wigner distribution of two states $\psi_1,\psi_2 \in L^2({\mathbb
  R}^n)$ is defined by
\begin{equation}
{\cal W}_{\psi_1,\psi_2}(x,y) = (2 \pi)^{-n} \int_{{\mathbb R}^n}
\psi_1(x + \tau/2) \psi_2^*(x - \tau/2) e^{-i \tau y} \, d \tau \ .
\end{equation}
Note that in contrast to~$(\ref{def:wigner})$, ${\cal
  W}_{\psi_1,\psi_2}^{\phantom *}$ is not generally real, but,
nevertheless, Hermitian i.e., ${\cal W}_{\psi_1,\psi_2} = {\cal
  W}_{\psi_2,\psi_1}^*$.  Moreover, it is not hard to check that the
mixed Wigner distribution is sesquilinear.

Next, let $\mu=(\mu_1,\ldots,\mu_n)$ be a multiindex of non-negative
integers, and let $x \in {\mathbb R}^n$.  The Hermite functions
$H_{\mu}(x)$ on ${\mathbb R}^{n}$ are defined \cite{Th,W} to be the
product of the normalized one-dimensional Hermite functions, i.e.,
$H_{\mu}(x) = \prod_{j=1}^n h_{\mu_j}(x_j)$, where
\begin{equation}
h_k(x) = \pi^{-{\frac 1 4}} \, (k!)^{- {\frac 1 2}}
\, 2^{-{\frac k 2}} \, (-1)^k \, e^{{ {x^2}/ 2}} \,
{\frac {d^k} {dx^k}} e^{-x^2} \ .
\end{equation}
It is well known that the $\{ H_{\mu} \}$ form a complete
orthonormal system for $L^2({\mathbb R}^n)$, and that
\begin{equation} \label{eq-Hermite-functions-Schrodinger-eigenstates}
{ \mathbb H } \, H_{\mu} = |\mu| \, H_{\mu},
\end{equation}
where $|\mu|= \sum_{j=1}^n \mu_j$, and ${\mathbb H}$ is the
Schr\"{o}dinger operator $\mathbb H = - {\frac 1 2} \Delta + {\frac 1
  2}|x|^2 -{\frac n 2}$.  Here $\Delta$ denotes the standard
$n$-dimensional Laplacian.  In particular, the sesquilinearity of the
Wigner distribution implies that for any $\psi \in L_2({\mathbb
  R}^n)$, one has
\begin{equation} \label{eq-decomp-of-wave-func}
{\cal W}_{\psi} =
\sum_{\mu} \sum_{\nu} \langle \psi, H_{\mu} \rangle \,
{\langle \psi, H_{\nu} \rangle}^*\,  {\cal
W}_{H_{\mu},H_{\nu}} \ .
\end{equation}
The following lemma shows that the integral of the off-diagonal
elements of~$(\ref{eq-decomp-of-wave-func})$ over any centered ball
$B^{2n}(r)$ vanishes (cf.~\cite{J1} Section 2.3).

\begin{lemma} \label{lemma-regarding-the-off-diagonal-elements}
Let $\mu,\nu$ be two multi-indices with $\mu \neq \nu$.
Then, for every $r \geqslant 0$,  one has
 \begin{equation} \int\limits_{{B^{2n}(r)}}  {\cal
W}_{H_{\mu},H_{\nu}} dxdy=0\ .
\end{equation}
\end{lemma}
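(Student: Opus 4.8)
The plan is to exploit the rotational symmetry of the Euclidean ball through the metaplectic action of the phase-space rotation group on the Wigner distribution. The crucial observation is that $B^{2n}(r)$ depends only on $\sum_{j=1}^n(x_j^2+y_j^2)$, so it is invariant not merely under the single rotation generated by the full harmonic oscillator, but under the whole torus of rotations acting independently in each symplectic plane $(x_j,y_j)$. A single overall rotation would only separate indices by total degree $|\mu|$, so this finer torus action is exactly what is needed to distinguish $\mu$ from $\nu$ in the case $|\mu|=|\nu|$ with $\mu\neq\nu$.

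Concretely, for each $j$ let $\widehat N_j=-\tfrac12\partial_{x_j}^2+\tfrac12 x_j^2$ be the one-dimensional harmonic oscillator in the $j$-th variable, so that $\mathbb H=\sum_j(\widehat N_j-\tfrac12)$ and $\widehat N_j H_\mu=(\mu_j+\tfrac12)H_\mu$. The first step is to record that the propagator $U_j(t):=e^{-it\widehat N_j}$ acts on Hermite functions by a pure phase, $U_j(t)H_\mu=e^{-it(\mu_j+1/2)}H_\mu$. The second and central step is the covariance of the Wigner distribution under the metaplectic representation: $U_j(t)$ is the metaplectic operator associated with the symplectic rotation $R_t^{(j)}$ that turns the plane $(x_j,y_j)$ by angle $t$ and fixes all other coordinates, whence
\begin{equation}
\mathcal W_{U_j(t)\psi_1,\,U_j(t)\psi_2}(z)=\mathcal W_{\psi_1,\psi_2}\bigl(R_{-t}^{(j)}z\bigr).
\end{equation}
Applying this with $\psi_1=H_\mu$, $\psi_2=H_\nu$ and using the sesquilinearity of the mixed Wigner distribution, the common factor $e^{-it/2}$ cancels and one gets $\mathcal W_{H_\mu,H_\nu}\bigl(R_{-t}^{(j)}z\bigr)=e^{-it(\mu_j-\nu_j)}\,\mathcal W_{H_\mu,H_\nu}(z)$.

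The final step is to integrate this identity over the ball. Since $R_{-t}^{(j)}$ is a measure-preserving rotation mapping $B^{2n}(r)$ onto itself, the change of variables $w=R_{-t}^{(j)}z$ leaves the resulting integral unchanged, giving
\begin{equation}
\int_{B^{2n}(r)}\mathcal W_{H_\mu,H_\nu}\,dxdy=e^{-it(\mu_j-\nu_j)}\int_{B^{2n}(r)}\mathcal W_{H_\mu,H_\nu}\,dxdy.
\end{equation}
Because $\mu\neq\nu$ there is an index $j$ with $\mu_j\neq\nu_j$; choosing any $t$ for which $e^{-it(\mu_j-\nu_j)}\neq1$ forces the integral to vanish, as claimed. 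I expect the main obstacle to be the careful justification of the covariance identity — identifying $e^{-it\widehat N_j}$ with the fractional Fourier transform and verifying that it rotates the Wigner variables — rather than the symmetry argument built on top of it. A purely computational alternative sidesteps this by invoking the classical closed form for the cross-Wigner function of Hermite functions: in polar coordinates $(x_j,y_j)=(\rho_j\cos\phi_j,\rho_j\sin\phi_j)$ each one-dimensional factor $\mathcal W_{h_{\mu_j},h_{\nu_j}}$ equals a radial Laguerre-type function times $e^{i(\mu_j-\nu_j)\phi_j}$; since the ball imposes no constraint coupling the angles $\phi_j$ to the radii, the angular integral $\int_0^{2\pi}e^{i(\mu_j-\nu_j)\phi_j}\,d\phi_j$ factors out and vanishes whenever $\mu_j\neq\nu_j$.
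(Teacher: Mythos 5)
Your primary argument is correct but takes a genuinely different route from the paper. The paper's proof is computational: it quotes the explicit closed form ${\cal W}_{h_j,h_k}(x_1,y_1)=\mathrm{const}\cdot z_1^{j-k}e^{-|z_1|^2}L_k^{j-k}(2|z_1|^2)$ (for $j\geq k$), takes the product over coordinates, and observes that the integral of $z^m$ or $\overline{z}^m$ over any centered circle vanishes for $m\neq 0$, using the rotation invariance of the ball. You instead run a pure symmetry argument: each Hermite function is a joint eigenfunction of the $n$ commuting one-dimensional oscillators $\widehat N_j$, the metaplectic covariance of the Wigner transform converts the phase $e^{-it(\mu_j-\nu_j)}$ picked up by ${\cal W}_{H_\mu,H_\nu}$ into a rotation of the $(x_j,y_j)$ plane, and invariance of the ball under that rotation forces the integral to be a fixed point of multiplication by a nontrivial phase, hence zero. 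This buys you independence from the Laguerre formula and makes transparent exactly which domains the lemma works for (anything invariant under the torus of independent planar rotations, e.g.\ balls, polydiscs, and more general Reinhardt-type regions in the $z_j$), at the cost of having to justify the identification of $e^{-it\widehat N_j}$ with the metaplectic operator of the planar rotation — a standard fact about the fractional Fourier transform, but one you correctly flag as the real content to be verified. Your ``purely computational alternative'' at the end (radial Laguerre factor times $e^{i(\mu_j-\nu_j)\phi_j}$, angular integral decouples from the radial constraint) is essentially identical to the paper's own proof, so you have in effect given both arguments.
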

\begin{proof}[\bf Proof of Lemma~\ref{lemma-regarding-the-off-diagonal-elements}]
  It is well known (see e.g.~\cite{Le}) that for the one-dimensional
  Hermite functions $\{ h_m \}$, one has:
\begin{equation} \label{Wigner-dist-of-herm-funct}
{\cal W}_{h_j,h_k}(x_1,y_1) =
\begin{cases}
\pi^{-1} \, (k!/j!)^{1/2} \, (-1)^k \,(\sqrt 2 z_1)^{j-k} \,
e^{-(|z_1|^2)} \, L_k^{j-k}(2|z_1|^2) & \text{if } j \geq k ,\\
\pi^{-1} \, (j!/k!)^{1/2} \, (-1)^j \, (\sqrt 2 \, {\overline z_1})^{k-j}
\, e^{-(|z_1|^2)} \, L_j^{k-j}(2 |z_1|^2 ) & \text{if
} k \geq j \ .
\end{cases}
\end{equation}
Here $z_1=x_1+iy_1$, and $L_n^{\alpha}$ are the Laguerre polynomials
defined by
\begin{equation} \label{Lag-pol-def}
L_j^{\alpha}(x) = {\frac {x^{-\alpha}  e^x} {j!}} {\frac {d^j} {dx^j}}
(e^{-x} x^{j+\alpha}),
\end{equation}
for $j \geq 0$ and $\alpha > -1$. Hence the lemma holds in the
$2$-dimensional case, i.e., when $n=1$, because the integral of $z^j$
or $\overline{z}^j$ over any circle centered at the origin equals zero
when $j\neq 0$.  The higher-dimensional case follows for the same
reason from~$(\ref{Wigner-dist-of-herm-funct})$, the fact that the
Wigner distribution function ${\cal W}_{H_{\mu}, H_{\nu}}(x,y)$
is the product of ${\cal W}_{h_{m_j},h_{n_j}}(x_j,y_j)$, and the
rotation invariance of the ball $B^{2n}(r)$.
\end{proof}

An immediate corollary of
Lemma~\ref{lemma-regarding-the-off-diagonal-elements},
definition~$(\ref{eq-def-of-energy})$, and
equality~$(\ref{eq-decomp-of-wave-func})$ is
\begin{corollary} \label{Cor1} In the notation above,
\begin{equation}
{\mathcal E} \left(B^{2n}(r) \right)= \sup_{\mu}
    \int_{B^{2n}(r)} {\cal W}_{H_{\mu}} dxdy,
  \end{equation} where the
  supremum is taken over all multi-indices $\mu=(\mu_1,\ldots,\mu_n)$
  of non-negative integers.
\end{corollary}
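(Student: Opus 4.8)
The plan is to integrate the Hermite expansion~$(\ref{eq-decomp-of-wave-func})$ of ${\cal W}_\psi$ term by term over the ball and read off the result. Fix $\psi \in L^2({\mathbb R}^n)$ with $\|\psi\|_2=1$ and set $c_\mu := \langle \psi, H_\mu \rangle$; completeness and orthonormality of $\{H_\mu\}$ then give $\sum_\mu |c_\mu|^2 = 1$. Substituting~$(\ref{eq-decomp-of-wave-func})$ into $\int_{B^{2n}(r)} {\cal W}_\psi \, dxdy$ and integrating termwise, Lemma~\ref{lemma-regarding-the-off-diagonal-elements} annihilates every off-diagonal term ($\mu \neq \nu$), leaving only the diagonal:
\begin{equation} \label{eq-diag-reduction}
\int_{B^{2n}(r)} {\cal W}_\psi \, dxdy = \sum_\mu |c_\mu|^2 \int_{B^{2n}(r)} {\cal W}_{H_\mu} \, dxdy .
\end{equation}

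Once~$(\ref{eq-diag-reduction})$ is established, the corollary follows from a short convexity argument. Writing $a_\mu := \int_{B^{2n}(r)} {\cal W}_{H_\mu} \, dxdy$, the right-hand side of~$(\ref{eq-diag-reduction})$ is a convex combination of the numbers $\{a_\mu\}$ with weights $|c_\mu|^2$ that sum to $1$, so it is at most $\sup_\mu a_\mu$. Taking the supremum over all admissible $\psi$ gives ${\mathcal E}(B^{2n}(r)) \leq \sup_\mu a_\mu$. The reverse inequality is immediate, since the choice $\psi = H_\mu$ (itself a unit vector) realizes the value $a_\mu$, whence $a_\mu \leq {\mathcal E}(B^{2n}(r))$ for every $\mu$ and therefore $\sup_\mu a_\mu \leq {\mathcal E}(B^{2n}(r))$. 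The two bounds together yield the asserted equality.

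The one step that genuinely requires care --- and which I would treat as the main obstacle --- is the termwise integration leading to~$(\ref{eq-diag-reduction})$, because the double series in~$(\ref{eq-decomp-of-wave-func})$ need not converge absolutely pointwise. I would settle this by a density argument. When $\psi$ is a finite linear combination of Hermite functions the interchange is trivial, so~$(\ref{eq-diag-reduction})$ holds outright. To reach general $\psi$, observe that $\psi \mapsto \int_{B^{2n}(r)} {\cal W}_\psi \, dxdy$ is a bounded quadratic form on $L^2({\mathbb R}^n)$: it equals $\langle \psi, A\psi \rangle$, where $A$ is the Weyl quantization of the indicator function of $B^{2n}(r)$, an operator that is Hilbert--Schmidt because $B^{2n}(r)$ has finite measure. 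Hence the left-hand side of~$(\ref{eq-diag-reduction})$ is continuous in $\psi$; the right-hand side is continuous as well, since the bound $|a_\mu| \leq \|A\|$ together with $\sum_\mu |c_\mu|^2 = \|\psi\|^2$ makes the series converge and depend continuously on $\psi$. Equality on the dense set of finite combinations then extends to all of $L^2({\mathbb R}^n)$, and with~$(\ref{eq-diag-reduction})$ secured the corollary is proved.
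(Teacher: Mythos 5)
Your argument is correct and follows exactly the route the paper intends: the paper simply declares the corollary to be an immediate consequence of Lemma~\ref{lemma-regarding-the-off-diagonal-elements}, definition~(\ref{eq-def-of-energy}), and the expansion~(\ref{eq-decomp-of-wave-func}), which is precisely your diagonal-reduction-plus-convex-combination argument. The only difference is that you justify the termwise integration (via boundedness of the quadratic form and density of finite Hermite combinations), a point the paper leaves implicit; your justification is sound.
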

The following lemma is the main ingredient in the proof of
Theorem~\ref{Flandrin's-result-in-higher-dimension}.
\begin{lemma} \label{main-lemma}
For any integer $\lambda \geq 0$ and multi-indices
  $\mu_1$, $\mu_2$ with $\lambda= |\mu_1|=|\mu_2|$, one has
\begin{equation}
  \int_{B^{2n}(r)} {\cal W}_{H_{\mu_1}} dxdy = \int_{B^{2n}(r)}
{\cal W}_{H_{\mu_2}} dxdy,
  \ {\rm for \ every \ } r \geqslant 0 \ .
\end{equation}
\end{lemma}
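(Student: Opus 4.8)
The plan is to prove that $\int_{B^{2n}(r)}\mathcal{W}_{H_\mu}\,dx\,dy$ depends on $\mu$ only through $\lambda=|\mu|$ by reducing it to a one-variable convolution whose Laplace transform manifestly depends only on $\lambda$. First I would specialize formula~$(\ref{Wigner-dist-of-herm-funct})$ to the diagonal case $j=k=m$: writing $z_j=x_j+iy_j$ and $\rho_j^2=|z_j|^2=x_j^2+y_j^2$, it gives $\mathcal{W}_{h_m}(x_j,y_j)=\pi^{-1}(-1)^m e^{-\rho_j^2}L_m(2\rho_j^2)$. Using the product structure of $\mathcal{W}_{H_\mu}$ recorded in the proof of Lemma~\ref{lemma-regarding-the-off-diagonal-elements}, this yields
\[
\mathcal{W}_{H_\mu}(x,y)=\frac{(-1)^\lambda}{\pi^n}\,e^{-\sum_j\rho_j^2}\prod_{j=1}^n L_{\mu_j}(2\rho_j^2).
\]

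Next I would pass to the variables $u_j=\rho_j^2=x_j^2+y_j^2$. Integrating out the $n$ angular variables replaces $dx_j\,dy_j$ by $\pi\,du_j$, cancels the prefactor $\pi^{-n}$, and turns the ball $\{\sum_j\rho_j^2\le r^2\}$ into the simplex $\Sigma_s=\{u\in\R^n_{\ge 0}:\sum_j u_j\le s\}$ with $s=r^2$. Hence $\int_{B^{2n}(r)}\mathcal{W}_{H_\mu}\,dx\,dy=(-1)^\lambda I_\mu(s)$, where
\[
I_\mu(s)=\int_{\Sigma_s}\prod_{j=1}^n\psi_{\mu_j}(u_j)\,du,\qquad \psi_m(u):=e^{-u}L_m(2u).
\]
Since the sign $(-1)^\lambda$ already depends only on $\lambda$, it remains to show that $I_\mu(s)$ depends only on $\lambda$.

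The key observation is that $I_\mu$ is a convolution. Introducing a slack variable $u_{n+1}=s-\sum_{j\le n}u_j\ge 0$ exhibits $I_\mu(s)$ as the value at $s$ of the $(n{+}1)$-fold half-line convolution $\psi_{\mu_1}*\cdots*\psi_{\mu_n}*\mathbf{1}$, where $\mathbf{1}$ is the Heaviside function encoding the inequality $\sum_j u_j\le s$. Taking the Laplace transform in $s$ converts convolution into multiplication, and the classical identity $\int_0^\infty e^{-pu}L_m(2u)\,du=(p-2)^m\,p^{-(m+1)}$ gives $\mathcal{L}[\psi_m](p)=(p-1)^m(p+1)^{-(m+1)}$, so that
\[
\mathcal{L}[I_\mu](p)=\frac1p\prod_{j=1}^n\frac{(p-1)^{\mu_j}}{(p+1)^{\mu_j+1}}=\frac{(p-1)^{\lambda}}{p\,(p+1)^{\lambda+n}}.
\]
This transform depends on $\mu$ only through $\lambda$ and $n$, so by injectivity of the Laplace transform $I_\mu(s)$ is independent of the particular $\mu$ with $|\mu|=\lambda$, which is the assertion of the lemma.

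I expect the one genuinely substantive step to be the middle one: recognizing the weighted simplex integral as a half-line convolution---absorbing the exponential weight into the factors $\psi_{\mu_j}$ and encoding the constraint $\sum_j u_j\le s$ by the extra factor $\mathbf{1}$---so that the Laplace transform factorizes and collapses to a product depending only on $\lambda$. The remaining ingredients (the diagonal case of~$(\ref{Wigner-dist-of-herm-funct})$, the angular integration, and the Laplace transform of the Laguerre polynomials) are routine. Conceptually, the same conclusion reflects the symmetry $U(n)\subset Sp(2n,\R)$ of the ball: by Lemma~\ref{lemma-regarding-the-off-diagonal-elements} the functional $\psi\mapsto\int_{B^{2n}(r)}\mathcal{W}_\psi$ restricts on the oscillator eigenspace $E_\lambda=\mathrm{span}\{H_\mu:|\mu|=\lambda\}$ to the diagonal Hermitian form $\sum_\mu|c_\mu|^2\,I_\mu(s)$; this form is invariant under the irreducible $U(n)$-representation $E_\lambda\cong\mathrm{Sym}^\lambda\mathbb{C}^n$, and Schur's lemma then forces all the $I_\mu(s)$ to coincide.
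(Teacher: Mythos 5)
Your proof is correct, but it takes a genuinely different route from the paper's. The paper argues by symmetry alone: it considers a maximizer $\widetilde\mu$ of $\int_{B^{2n}(r)}{\cal W}_{H_\mu}$ over $\{\mu:|\mu|=\lambda\}$, uses the $O(n)$-invariance of the eigenspace ${\mathcal H}_\lambda$ together with sesquilinearity and Lemma~\ref{lemma-regarding-the-off-diagonal-elements} to write the (rotation-invariant) value at $H_{\widetilde\mu}({\mathcal R}x)$ as the convex combination $\sum_\nu|c_\nu(\widetilde\mu,{\mathcal R})|^2\int_{B^{2n}(r)}{\cal W}_{H_\nu}$, concludes that every $\nu$ appearing with nonzero coefficient is also a maximizer, and then connects any two multi-indices of degree $\lambda$ by a chain of explicit $\pi/4$ rotations in coordinate planes (checking nonvanishing of coefficients by comparing leading monomials). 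Your conceptual aside about $U(n)$-invariance and Schur's lemma is essentially a streamlined version of that argument. Your main argument, by contrast, is a direct computation: reduce to the simplex integral $I_\mu(s)=\int_{\Sigma_s}\prod_j e^{-u_j}L_{\mu_j}(2u_j)\,du$, recognize it as the one-sided convolution $\psi_{\mu_1}*\cdots*\psi_{\mu_n}*\mathbf 1$, and observe that its Laplace transform $(p-1)^\lambda p^{-1}(p+1)^{-(\lambda+n)}$ depends only on $\lambda$ and $n$; all the ingredients (the diagonal case of~(\ref{Wigner-dist-of-herm-funct}), the angular integration, the transform $\int_0^\infty e^{-(p+1)u}L_m(2u)\,du=(p-1)^m(p+1)^{-(m+1)}$, and Lerch uniqueness) check out, and the sanity checks agree with~(\ref{basic}). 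What your approach buys is an explicit closed form for the common value as an inverse Laplace transform, which in particular re-derives the right-hand side of~(\ref{basic}) for $\lambda=0$ and could be used to study the non-monotonicity in $\lambda$ and $r$ noted in the paper's remark; what the paper's approach buys is a soft, computation-free argument that makes the role of the rotational symmetry of the ball transparent and would survive replacing the Laguerre identities by any other realization of the eigenspaces.
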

Postponing the proof of Lemma~\ref{main-lemma}, we first conclude the
proof of Theorem~\ref{Flandrin's-result-in-higher-dimension}.
\begin{proof}[{\bf Proof of Theorem~\ref{Flandrin's-result-in-higher-dimension}}]
  It follows from Corollary~\ref{Cor1} and Lemma~\ref{main-lemma}
  above that
  \begin{equation} \label{ball}
{\mathcal E} \left(B^{2n}(r) \right)=
    \sup_{\lambda} \int_{B^{2n}(r)} {\cal W}_{H_{\mu_\lambda}}
    dxdy,
\end{equation}
where $\mu_{\lambda} = (\lambda,0,\ldots,0)$, and $\lambda$ is a
non-negative integer.  Moreover,
from~$(\ref{Wigner-dist-of-herm-funct})$ and the definition of the
Wigner distribution it follow that:
\begin{equation} \label{eq1-in-proof-of-main-result}
 {\cal W}_{H_{\mu_\lambda}} (x,y)= {\frac {(-1)^\lambda} {\pi^n}} \, e^{- \sum_{i=1}^n (x_i^2+y_i^2)} \, L_\lambda \bigl (2(x_1^2+y_1^2) \bigr),
\end{equation}
where $L_\lambda(z)$ are the $\alpha=0$ Laguerre
polynomials~$(\ref{Lag-pol-def})$.  Setting $z_j= x_j + iy_j$, we
conclude that
 \begin{equation}
   \int\limits_{B^{2n}(r)} {\cal W}_{H_{\mu_\lambda}} dxdy =
   \int\limits_{\sum\limits_{j=2}^n |z_j|^2  \leq r^2}
    \!\!\!\! \!\! e^{- \sum\limits_{j=2}^n |z_j|^2}
\Bigl ( \int\limits_{ |z_1|^2  \leq r^2- \sum\limits_{j=2}^n |z_j|^2 }
   {\frac {  (-1)^\lambda} {\pi^n} } \, e^{-|z_1|^2} \,
L_\lambda  (2 |z_1|^2)  \, dz_1 \Bigr )  dz_2\cdots dz_n.
\end{equation}
On the other hand, from Flandrin's result in the $1$-dimensional
case~\cite{F}, it follows that
\begin{equation}
  \int\limits_{B^{2}(\alpha)} {\cal W}_{h_{\lambda}} \,dx_1dy_1 =    \int\limits_{ |z_1|^2  \leq \alpha^2 }
  (-1)^\lambda \, e^{-|z_1|^2} \, L_\lambda  (2 |z_1|^2)  \, dz_1
\leqslant \int\limits_{ |z_1|^2  \leq \alpha^2  }
  e^{-|z_1|^2}  \, dz_1,
\end{equation}
for every radius $\alpha \geqslant 0$.  An examination of Flandrin's proof
reveals that the inequality is strict for $\lambda >0$.
Hence, for every non-negative integer $\lambda$  one has
\begin{equation} \label{eq2-in-proof-of-main-result}
  \int\limits_{B^{2n}(r)} {\cal W}_{H_{\mu_\lambda}} \, dxdy \leqslant
    \pi^{-n} \!\!\!\!\!\!\int\limits_{ \sum\limits_{j=1}^n |z_j|^2 \leq r^2}
  e^{- \sum\limits_{j=1}^n |z_j|^2} dz_1\cdots dz_n = 1-{\frac
    {\Gamma(n,r^2)} {(n-1)!}}
\end{equation}
with equality only for $\lambda =0$. The proof of
Theorem \ref{Flandrin's-result-in-higher-dimension}  now follows
from~$(\ref{eq1-in-proof-of-main-result})$
and~$(\ref{eq2-in-proof-of-main-result})$.
\end{proof}

{\bf Remark:} The integral in (\ref{ball}\!\!) is not monotone in
$\lambda$ or in $r$ (except for $\lambda= 0$), as might have been
thought. See \cite[Fig. 2]{BDW1} and \cite{F} for interesting graphs
of these integrals as a function of $r$.

For the proof of Lemma~\ref{main-lemma} we shall need the following
preliminaries.  For a non-negative integer $\lambda $ denote
\begin{equation}
  {\mathcal H}_{\lambda} = {\rm span} \{ H_{\mu} \ ; \ |\mu| = \lambda \}
\subset L^2({\mathbb R}^n)\ .
\end{equation}
It follows from~$(\ref{eq-Hermite-functions-Schrodinger-eigenstates})$
above that the space ${\mathcal H}_{\lambda}$ consists of the
eigenfunctions of the rotation invariant Schr\"{o}dinger operator
$\mathbb H= - {\frac 1 2} \Delta + {\frac 1 2}|x|^2 -{\frac n 2} $
with eigenvalue $\lambda $.  In particular, it is a
finite-dimensional, $O(n)$-invariant subspace of $L^2({\mathbb R}^n)$
with orthonormal basis $\{H_\mu \, : |\mu|=\lambda \}$. It follows that
for every ${\mathcal R} \in O(n)$, and every $\widetilde \mu$ with $| \widetilde \mu | = \lambda$, one has:
\begin{equation}\label{rotation}
H_{\widetilde \mu}({\mathcal R}x) = \sum\limits_{\nu \, : \,
|\nu|=\lambda} c_{\nu}({\widetilde \mu, {\mathcal R}}) \, H_{\nu}(x),
\end{equation}
where the coefficients $c_{\nu}({\widetilde \mu, {\mathcal R}})$
satisfy $ \sum |c_{\nu}({\widetilde \mu, {\mathcal R}})|^2=1$.

We note the following useful fact: In order to identify which
coefficients $ c_{\nu}({\widetilde \mu, {\mathcal R}}) $ are non-zero,
it is only necessary to check  the leading powers on the two sides of
(\ref{rotation}). That is, the left side of (\ref{rotation}\!\!) defines  a
polynomial of degree $\lambda$ in the indeterminates $x_1,\dots, x_n$.
The highest degree terms are the monomials $x_1^{\mu_1 } \cdots
x_n^{\mu_n}$ with $\sum_{j=1}^n\mu_j=\lambda$, but there are also
monomials of degree lower than $\lambda$. In order to show that a given $H_\nu$
appears with a non-zero coefficient in the decomposition
(\ref{rotation}\!\!), it is only necessary to show that there is a highest
degree monomial $x_1^{\nu_1} \cdots x_1^{\nu_n}$ in the decomposition.
It is {\emph not} necessary to check the lower degree monomials; they
will appear automatically because we know that the decomposition
contains only Hermite functions of degree $\lambda$ and no others.

\begin{proof}[{\bf Proof of Lemma~\ref{main-lemma}}:]
  Fix a non-negative integer $\lambda $, and $r
  \geqslant 0$.  We consider the maximum problem
\begin{equation}
 \max\limits_{\mu \, : \, |\mu| = \lambda}
\int\limits_{B^{2n}(r)} {\cal W}_{H_{ \mu}} dxdy,
\end{equation}
and denote by $\widetilde \mu$ one of its maximizers.

From the sesquilinearity property of the Wigner distribution and Lemma
\ref{lemma-regarding-the-off-diagonal-elements}, we conclude that for every ${\mathcal R} \in O(n)$ one has:
\begin{equation}
\int\limits_{B^{2n}(r)}  {\cal W}_{H_{\widetilde \mu}({\mathcal R} \,x)} \, dxdy =
  \sum_{\nu}  | c_{\nu}({\widetilde \mu, {\mathcal R}}) |^2
  \int\limits_{B^{2n}(r)}  {\cal W}_{H_{ \nu}} \,dxdy  \ .
\end{equation}
Since $H_{\widetilde \mu}$ is a maximizer, this implies that for any
$\nu_0$ with $c_{\nu_0}({\widetilde \mu, {\mathcal R}}) \neq 0$ one
has
\begin{equation} \label{main-lemma-special-equal-maximizer}
  \int\limits_{B^{2n}(r)} {\cal W}_{H_{\widetilde \mu}} \,dxdy =
  \int\limits_{B^{2n}(r)} {\cal W}_{H_{\widetilde \mu}({\mathcal R}
    \,x)} \, dxdy = \int\limits_{B^{2n}(r)} {\cal W}_{H_{ \nu_0}}
  \,dxdy\ ,
\end{equation}
i.e., $H_{\nu_0}$ is also a maximizer.  The lemma will be proved if we
can show that, starting from any $\widetilde \mu $, we can, by a
succession of rotations and intermediate indices, finally reach
any given $\nu$.

The proof will proceed in two steps. The first is to go from
$\widetilde \mu $, by a succession of two-dimensional rotations, to
$(\lambda,0,0,\ldots,0)$ with $\lambda = \sum_{j=1}^n
\widetilde\mu_j$.

First, we show that there is a
rotation ${\mathcal R}' \in O(n)$ with
\begin{equation} \label{step-a-in-the-lemma}
  c_{\widetilde\mu'}(\widetilde \mu, {\mathcal R}') \neq 0, \ {\rm
    where} \ \ \widetilde\mu' := ((\widetilde\mu_1 +\widetilde\mu_2), 0,
\widetilde\mu_3,\ldots,
\widetilde\mu_n)\ .
\end{equation}
Thus, $\widetilde\mu' $ is also a maximizer. In a similar fashion, we
can go from $\widetilde\mu' $ to $\widetilde\mu'' $, where
$\widetilde\mu'':= ((\widetilde\mu_1
+\widetilde\mu_2+\widetilde\mu_3), 0, 0, \widetilde\mu_4,\ldots,
\widetilde\mu_n)$. Proceeding inductively, we finally arrive at the
conclusion that $(\lambda,0, \dots, 0)$ is a maximizer.

A rotation ${\mathcal R}' $ that accomplishes the first step to
$\widetilde\mu'$ is simply ${\mathcal R}' : x_1 \to (x_1+x_2)/\sqrt2,
\ x_2 \to (x_1-x_2)/\sqrt2, \ x_j\to x_j$ for $j>2$.  The monomial
$x_1^{\widetilde\mu_1} x_2^{\widetilde\mu_2}$ becomes $\frac{1}{2}
(x_1+x_2)^{\widetilde\mu_1} (x_1-x_2)^{\widetilde\mu_2} $ and this
obviously contains the monomial $x_1^{(\widetilde\mu_1
  +\widetilde\mu_2)} $ with a non-zero coefficient.

The second step is to go in the other direction, from $(\lambda, 0,
\dots, 0)$ to $(\nu_1, \nu_2, \dots, \nu_n)$ when $\sum_{j=1}^n
\nu_j = \lambda$. As before, we do this with a sequence of two-dimensional
rotations, the first of which takes us from $(\lambda, 0, \dots, 0)$
to $(\lambda -\nu_2, \nu_2, 0, \dots, 0)$.  From thence we go to
$(\lambda -\nu_2-\nu_3, \nu_2, \nu_3, 0,\dots, 0)$, and so forth.
This can be accomplished with the same
rotation as before, namely ${\mathcal R}' : x_1 \to (x_1+x_2)/\sqrt2,
\ x_2 \to (x_1-x_2)/\sqrt2, \ x_j\to x_j$ for $j>2$.

\end{proof}

\noindent {\bf Acknowledgements:} We thank P. Flandrin, A. J. E. M.
Janssen, and F. Luef for helpful discussions.  This work was partially
supported by U.S National Science Foundation grants PHY-0965859 (E. H.
L.), and (DMS-0635607) (Y. O.).

{\scriptsize

\bigskip
\noindent Elliott H. Lieb\\
Departments of Mathematics and Physics,
Princeton University. P.O. Box 708, Princeton NJ 08542, USA\\
Email: lieb@princeton.edu \bigskip

\noindent Yaron Ostrover\\
School of Mathematics, Institute for Advanced Study,
Princeton NJ 08540, USA\\
Email:  ostrover@ias.edu

}

\end{document}